\documentclass{lmcs}
\pdfoutput=1
\usepackage[utf8]{inputenc}

\usepackage{lastpage}
\lmcsdoi{21}{4}{23}
\lmcsheading{}{\pageref{LastPage}}{}{}%
{Apr.~07,~2025}{Nov.~21,~2025}{}

\usepackage{amsmath}
\usepackage{amsthm}
\usepackage{amssymb}

\newcommand{\comp}{\mathrm{comp}}
\renewcommand{\P}{\mathsf{P}}
\newcommand{\PSPACE}{\mathsf{PSPACE}}
\newcommand{\EXP}{\mathsf{EXP}}
\newcommand{\EXPax}{\mathrm{Exp}}
\newcommand{\acc}{\mathrm{acc_{\EXP}}}
\newcommand{\NEXP}{\mathsf{NEXP}}
\newcommand{\coNEXP}{\mathsf{coNEXP}}
\newcommand{\poly}{\mathsf{poly}}
\newcommand{\PHP}{\mathrm{PHP}}

\newcommand{\NN}{\mathbb{N}}
\newcommand{\St}{\tilde \Sigma^{1,b}}
\newcommand{\Log}{\mathrm{Log}}

\let\phi\varphi
\let\epsilon\varepsilon

\begin{document}

\title[On the consistency of stronger lower bounds for NEXP]
{On the consistency of stronger \texorpdfstring{\\}{} lower bounds for NEXP }

\thanks{Supported by the Czech Academy of Sciences (RVO 67985840) and GA\v{C}R grant 23-04825S}

\author{Neil Thapen}
\address{Institute of Mathematics, Czech Academy of Sciences}
\email{thapen@math.cas.cz}  

\begin{abstract}
It was recently shown by Atserias, Buss and M\"uller
that the standard comple\-xi\-ty-theoretic conjecture $\NEXP \not \subseteq \P / \poly$
is consistent with the relatively strong bounded arithmetic theory $V^0_2$,
which can prove a substantial part of
complexity theory. We observe that their approach
can be extended to show that the stronger conjectures 
$\NEXP \not \subseteq \EXP / \poly$
and $\NEXP  \not \subseteq \coNEXP$
are consistent with a stronger theory,
which includes every true universal number-sort sentence.
\end{abstract}

\maketitle

\section{Introduction}

The bounded arithmetic hierarchy
$S^1_2 \subseteq T^1_2 \subseteq S^2_2 \subseteq \dots$,
with union~$T_2$,
is a well-studied family of first-order theories that plausibly captures
the kind of reasoning one can do if one is limited to using concepts in
the polynomial hierarchy~\cite{buss1985bounded}.
It has long been of interest how much of mathematics can be
carried out in this setting~\cite{pww:primes}.
Many results in complexity theory can be formalized in it, at low levels in the hierarchy;
some more recent examples are the PCP theorem, 
Toda's theorem, the Schwartz-Zippel lemma
and many circuit lower bounds~\cite{
pich2015logical, buss2015collapsing, atserias2024feasibly, muller2020feasibly}.

Now consider a theory~$T$,
such as $T_2$ or some fragment of it, which is known to formalize
a substantial part of complexity theory, and take a
conjecture $\mathcal{C}$ that you would like to prove. If you can show that $\mathcal{C}$
is unprovable in~$T$, then one interpretation of this is that the methods
 which work for large parts of complexity theory are not enough 
to prove~$\mathcal{C}$, and you need to try something new;
see~\cite{pich2021strong} for some work in this direction.
On the other hand, if the negation $\neg \mathcal{C}$ is unprovable in~$T$,
this shows that $\mathcal{C}$ is at least consistent with~$T$ and thus
with a large part of complexity theory. Concretely, there is a well-behaved structure
(a~model of~$T$)
which satisfies many of the complexity-theoretic properties of the real world
and in which~$\mathcal{C}$ is true. Interpreted optimistically, this is a partial result
in the direction of showing~$\mathcal{C}$ is true in the real world~\cite{krajicek1995bounded}.

The $T^i_2$ hierarchy is not able to reason naturally about
complexity phenomena at the level of $\PSPACE$, $\EXP$ or above, since 
by design it is limited to working with polynomial-length strings.
Already~\cite{buss1985bounded} introduced a 
stronger hierarchy $V^0_2 \subseteq V^1_2 \subseteq \dots$
of two-sorted or ``second-order'' theories which also work with larger
objects which we will call here sets, but which could just as
well be called exponential-length strings. The base theory
$V^0_2$ in the hierarchy is a conservative extension of $T_2$, and the next level~$V^1_2$
can prove many basic properties of~$\EXP$, such as that every exponential-time
machine (even with an oracle) has a computation.

It was recently shown in~\cite{atserias2024consistency} that the
standard  complexity-theoretic conjecture $\NEXP \not \subseteq \P / \poly$
is consistent with~$V^0_2$. That is,

\begin{thmC}[\cite{atserias2024consistency}] \label{the:abm}
$V^0_2 \nvdash \NEXP \subseteq \P / \poly$.
\end{thmC}

The authors suggest this is the best currently available evidence for the truth of
the conjecture.
The purpose of this  note is to show that their proof of Theorem~\ref{the:abm},
based on the well-known unprovability of the pigeonhole principle
in (relativized) $T_2$ or $V^0_2$,
can be extended to show that the  stronger
conjecture $\NEXP \not \subseteq \EXP / \poly$ is consistent with a  stronger theory:

\begin{thm} \label{the:exp_poly}
$V^0_2 + \forall \St_1(\NN) \nvdash \NEXP  \subseteq \EXP / \poly$.
\end{thm}

See Definition~\ref{def:St_1} below for the precise definition of $\forall \St_1(\NN)$, 
but it is roughly all sentences $\forall \vec x \phi(\vec x)$ which are true in $\NN$,
where $\vec x$ are first-order and $\phi(\vec x)$ is a $\NEXP$ property.
It thus contains
every true $\Pi_1$ number-sort sentence, including 
a fortiori all $\Pi_1$ number-sort
consequences of $V^1_2$, and also contains the axiom 
that every exponential-time machine has a computation on every input.

Furthermore by a slightly different argument we show (with a caveat 
that the result depends on a choice to formalize $\coNEXP$ without parameters --
see the discussion at the end of the paper):

\begin{thm} \label{the:conexp}
$V^0_2 + \forall \St_1(\NN) \nvdash \NEXP  \subseteq \coNEXP$.
\end{thm}

These two theorems could be taken as evidence towards the conjectures
$\NEXP \not \subseteq \EXP / \poly$  and
$\NEXP \not \subseteq \coNEXP$  being true;
on the other hand, the theorems are themselves not so difficult to prove
from the pigeonhole principle lower bound,
as was already observed in~\cite{atserias2024consistency},
and it is tempting to conclude that this work rather
shows that~$V^0_2$, even when significantly strengthened, 
is not really equipped to reason nontrivially
about set-sort quantification.

This work also suggests a slightly different perspective on one of the observations
of~\cite{atserias2024consistency}, that $\NEXP \not \subseteq \P / \poly$
is consistent with the considerable amount of complexity theory that
can be formalized in $T_2$. By our construction, the  argument of~\cite{atserias2024consistency} 
can be made to show that the conjecture is consistent with \emph{all} true
statements of complexity theory that can be written as~$\Pi_1$
number-sort statements, since these are included for free in $\forall \St_1(\NN)$.

Does this mean that the prior work on formalizing complexity theory inside~$T_2$
is irrelevant for the implications of these consistency results?
I would say no, since to talk about $\NEXP$ we are working
with set-sort objects, as this is how we choose to formalize exponential-length computations.
The work on formalizing complexity still holds in the presence of such objects,
in that (as far as I know) it all relativizes and is still valid for reasoning about
machines which have access to these objects as oracles,
while our theory $\forall \St_1(\NN)$ has nothing
to say about such a situation.

We have not been able to combine Theorems~\ref{the:exp_poly}
and~\ref{the:conexp} into the natural next step, the consistency
of $\NEXP \not \subseteq \coNEXP / \poly$ with our theory.
This would be particularly interesting as $\NEXP \subseteq \coNEXP / \poly$
is in fact true, as can be shown by a  census argument reported
in~\cite{buhrman2009unconditional}
(although it should be kept in mind that all our constructions rely on a false
statement about $\NEXP$, namely the existence of a function violating the pigeonhole principle,
being consistent with our theory). We comment on one of the
difficulties here at the end of the paper.

Our proofs are self-contained and do not rely on~\cite{atserias2024consistency},
but we assume some  knowledge of bounded arithmetic.
We give an overview below to fix notation, generally following~\cite{atserias2024consistency}.
For  details see e.g.~\cite{buss1985bounded, krajicek1995bounded, buss1998first}.

\medskip

Bounded arithmetic in the style of~\cite{buss1985bounded} has variables $x,y, \dots$ 
ranging over numbers, which we identify when convenient with binary strings.
It has the first-order language $x \le y$, $0$, $1$, $x+y$, $x \cdot y$, $\lfloor x/2 \rfloor$,
$x \# y$, $|x|$ and built-in equality~$x=y$. Here $|x|$ is the length of~$x$ in binary,
and the smash function $\#$ is a weak form of exponentiation defined so that 
$|x \# y| = |x|\cdot|y|$, so the main effect of the totality of smash is that
lengths are closed under polynomials. We will use the abbreviation $x {\in} \Log$
for $\exists y (x = |y|)$ and will write expressions like $2^{x}$ for such numbers,
so as in~\cite{atserias2024consistency} a formula of the form e.g.
$\forall x {\in} \Log ( \dots 2^{x^2} \dots)$ stands for 
$\forall x \forall y ( x = |y| \rightarrow \dots y\#y \dots )$.

A bounded formula is one in which every quantifier is bounded, of the 
form $\forall x {<}t$ or $\exists y {<} t$ for some term~$t$.
We write $\Sigma^b_\infty$ for the set of all bounded formulas.
By $\Pi_1$ above we mean the set of universal closures of $\Sigma^b_\infty$
formulas.

 The theory $T_2$~\cite{buss1985bounded} consists of a set of axioms BASIC,
which are  bounded formulas fixing the basic properties of the 
language, and the induction scheme $\Sigma^b_\infty$-IND, that is, the axiom
\[
 \phi(0) \wedge \forall y{<}z [ \phi(y) \rightarrow \phi(y+1) ] \, \rightarrow \, \phi(z)
\]
for every $\Sigma^b_\infty$ formula $\phi$, which may include other parameters.
The theory $T_2$ is essentially the same as $I\Delta_0 + \Omega_1$~\cite{pww:primes}.

In \emph{two-sorted} bounded arithmetic we add to the language new variables
$X, Y, \dots$ of the \emph{set sort}, representing bounded sets, 
and a relation $x \in X$ between the number
and set sorts. For $\vec X$ a tuple of set variables, we write $\Sigma^b_\infty(\vec X)$
for the set of bounded formulas which may now include these variables, and 
$T_2(\vec X)$ for BASIC$ + \Sigma^b_\infty(\vec X)$-IND
(this is technically slightly different from the way similar notation
is used in~\cite{atserias2024consistency}).
Here we do not allow any quantification over  set-sort variables, so these really
play the role of undefined ``oracle'' predicate symbols.
A theory of this form is sometimes called \emph{relativized}.

Note that, for simplicity, we do not formally enforce that 
these sets are bounded.  For example we do not introduce
a number-valued symbol $|X|$ for the maximum element 
of~$X$, or any axiom like $\forall x ( x \in X \rightarrow x \le |X|)$. 
The bounds on the number-sort in $\Sigma^b_\infty(\vec X)$ formulas 
guarantee that they only access a bounded part of the sets $\vec X$,
and it does not matter what these sets contain beyond this bound.

We write set-sort quantification as e.g. $\exists X$ rather than 
$\exists_2 X$. 
We will rely on capitalization to distinguish the sorts of variables.
A $\Sigma^{1,b}_0$ formula is a formula in the two-sorted language which
can freely use set-sort variables, which contains no set-sort quantifiers,
and in which every number-sort quantifier is bounded.
A~$\Sigma^{1,b}_1$~formula can further contain set-sort 
existential quantifiers (but not inside negations).
 The theory $V^0_2$ extends $T_2$ by 
  the bounded comprehension and induction schemes for $\Sigma^{1,b}_0$~formulas.
 The theory $V^1_2$ further adds induction for $\Sigma^{1,b}_1$~formulas.
  
We introduce the ad-hoc notation $\tilde \Sigma^{1,b}_1$
for formulas of the form $\exists \vec X \phi(\vec X, \vec z)$
where $\phi$ is $\Sigma^{1,b}_0$ and contains no set-sort variables
other than $\vec X$, that is, where~$\phi$ is~$\Sigma^b_\infty(\vec X)$.
These express precisely the $\NEXP$ predicates.
 
 \begin{defi} \label{def:St_1}
The  theory $\forall \St_1(\NN)$ consists of
every \emph{true} sentence of the form $\forall \vec z \exists \vec X \phi(\vec X, \vec z)$,
where $\phi$ is $\Sigma^b_\infty(\vec X)$.
\end{defi}

Here ``true'' means true in the standard model, that is, in the two-sorted structure where
the number-sort is $\NN$ and the set-sort is all bounded subsets of~$\NN$.
Note that these sentences do not contain any universal set-sort quantifiers.

\section{$\NEXP$ and $\EXP / \poly$}

Let $\comp(e,t,W)$ be a $\Sigma^b_\infty(W)$ formula
expressing that $W$ is a computation of the universal Turing machine
on input $e$ running for time $t$ in space $t$. We may think of the first
$t^2$ entries of $W$ as
encoding of a $t \times t$ grid,
where the $i$th column describes the contents of the tape at time $i$;
we also use part of~$W$ to encode the sequence of states and head positions.

We take $\EXPax$ to be the axiom $\forall e \forall t \exists W \comp(e,t,W)$
expressing that every exponential-time machine has a computation\footnote{
Our axiom $\EXPax$ should not be confused
with the complexity class $\EXP$, or with the standard notation exp 
for the axiom asserting that
exponentiation is total on the number sort.
}. This captures exponential-time because we think of a number $x$
as taking an exponential \emph{value} compared to its binary \emph{length} $|x|$.
For example, to express instead that every polynomial-time machine has a computation
we would replace $\comp(e,t,W)$ with~$\comp(e,|t|,W)$; furthermore in this case instead of $W$ we
could use a number-sort variable~$w$ bounded by roughly~$t \# t$ to encode the computation.

The formula $\EXPax$ is $\forall \St_1$, without any set-sort parameters
or universal set-sort quantifiers. It is true in the standard model,
and so is included in $\forall \St_1(\NN)$. 
If we allowed set-sort parameters as oracle inputs to the machine,
 $\EXPax$ would become a much stronger axiom. It would imply, for example, that 
every exponential-size circuit has a computation, and in fact would be enough to prove
the pigeonhole principle, destroying our main construction below.

We start by proving a slightly simpler version of Theorem~\ref{the:exp_poly}:

\begin{prop} \label{pro:exp_poly_simple}
$V^0_2 + \EXPax \nvdash \NEXP \subseteq \EXP / \poly$.
\end{prop}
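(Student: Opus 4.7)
The plan is to adapt the argument for Theorem~\ref{the:abm} from~\cite{atserias2024consistency} in two coordinated ways: ensuring the model of $V^0_2$ we build also satisfies $\EXPax$, and arranging the parameters of the PHP-violation construction so that the circuit lower bound we extract is against $\EXP/\poly$ rather than only $\P/\poly$.

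First, I would appeal to the unprovability of the (relativized) pigeonhole principle in $V^0_2$ to obtain a countable model $M$ of $V^0_2$ containing a set $F$ coding an injective map $[a{+}1]\to[a]$ for some nonstandard $a\in M$. This is the standard starting point used in~\cite{atserias2024consistency}.

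Second, I would show that $M$ can be taken to also model $\EXPax$. The crucial observation, stressed in the paragraph immediately preceding the statement of the proposition, is that $\EXPax$ asserts existence of computations $W$ of the universal Turing machine \emph{without} set-sort oracle input: each required $W$ depends only on the number parameters $(e,t)$ and not on $F$. I would set up the forcing or compactness construction of $M$ so as to simultaneously enforce (i) the injectivity of $F$ and (ii) the presence of a computation set $W$ for every pair $(e,t)\in M$. Because these families of set-sort existence statements act on disjoint set parameters, they should combine without conflict.

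Third, I would exhibit a $\tilde{\Sigma}^{1,b}_1$ formula defining an $\NEXP$ predicate and argue that it has no $\EXP/\poly$ circuit in $M$, following the ABM template: assume a circuit $C$ of the claimed size computing the predicate exists in $M$; translate its correctness on PHP-related inputs into a short refutation of PHP in the relevant propositional proof system; contradict the known PHP lower bound. The additional point in our setting is calibrating the domain size $a$ of the PHP violator large enough relative to the input length $n$ of the $\NEXP$ predicate, so that the lower bound rules out not only $n^{O(1)}$-size but also $2^{n^{O(1)}}$-size circuits. Since the resolution-style lower bounds underlying the PHP unprovability are exponential in the domain size, taking $a$ of the form $2^{n^{c+1}}$ for each target exponent $c$ (and combining via compactness) should be enough.

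The main obstacle I expect is the model-theoretic step two: verifying that the PHP-violation model is compatible with containing the oracle-free computation sets required by $\EXPax$. One must check that enlarging the model to include these sets does not allow it to internally refute the injectivity of $F$---in particular, that no $\Sigma^{1,b}_0$-definable set over the new computations accidentally codes enough information to recover a PHP collision. The oracle-free formulation of $\EXPax$ is exactly what makes this feasible, but a careful verification is needed, and this is the reason the remark on oracle-accessed $\EXPax$ is flagged as destroying the construction.
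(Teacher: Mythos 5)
Your first two steps track the paper's Lemma~\ref{lem:model}: one builds a model of $T_2(R,Z)$ in which $\PHP$ fails at $a=2^n$ and in which a single oracle $Z$ stores a computation $Z^e$ of the universal machine for every $e<2^{n^k}$ (Parikh's theorem is what lets one bound in advance which instances of $\EXPax$ are needed, a point you omit); the propositional translations of the two requirements are in disjoint variables, the computation part is satisfiable outright by actually running the machines, and restricting by that assignment reduces everything to the known constant-depth Frege lower bound for $\PHP$. That part of your plan is sound.

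The gap is in your third step. $\EXP/\poly$ is not an exponential-size circuit class against which one could prove a lower bound by ``calibrating $a$'': every language whatsoever has circuits of size $2^{O(n)}$, so no such lower bound exists, and in the actual argument the propositional $\PHP$ lower bound is used exactly once, inside the model-existence lemma, against quasipolynomial-size refutations --- never against circuits. The mechanism that finishes the proof is induction. Formalizing ``$\phi\in\EXP/\poly$'' as the existence, for each length $n$, of advice $d<2^{n^c}$ with $\acc((d,x),2^{n^c})\leftrightarrow\phi(x)$ for all $x<2^n$, one observes that because $Z$ already contains every deterministic computation of time up to $2^{n^k}$, the predicate $\acc((d,x),2^{n^c})$ is equivalent in $M$ to a $\Sigma^b_\infty(Z)$ formula (a table lookup in $Z$), and $\Sigma^b_\infty(Z)$-induction holds in $T_2(R,Z)$. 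Since $\forall Y\,\PHP_x(Y)$ holds for standard $x$ while $\neg\PHP(a,R)$ holds in $M$, induction on this formula produces an $x<a$ with $\forall Y\,\PHP_x(Y)$ but $\exists Y'\,\neg\PHP_{x+1}(Y')$, which is absurd because a $\PHP$ violation at $x+1$ can be converted into one at $x$. Your proposal has no substitute for this step, and in particular never exploits the determinism of the $\EXP$ machine, which is exactly what makes $\acc$ both $\Sigma^{1,b}_1$- and $\Pi^{1,b}_1$-definable and hence amenable to induction; without that, the argument does not go through.
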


To formalize $\EXP / \poly$ we introduce
a formula $\acc(e, t)$ to express that the universal deterministic Turing machine,
run for time and space $t$ on input~$e$, accepts. Note that over
$V^0_2 + \EXPax$ this can be written equivalently as a $\Sigma^{1,b}_1$ 
formula and as a $\Pi^{1,b}_1$ formula,
\begin{align*}
&\exists W, \ \comp(e,t,W) \wedge \text{``$W$~is accepting'' \ and} \\
&\forall W, \ \comp(e,t,W) \rightarrow \text{``$W$~is accepting''},
\end{align*}
 since $V^0_2$ proves that
any two computations of the same machine are equal. 
Here ``$W$~is accepting'' means simply that the bit of $W$ coding the
ouput of the computation is 1.

Let $\phi$ be any formula and let $c \in \NN$. We define a sentence $\alpha^c_\phi$ expressing
that $\phi$ is in $\EXP / \poly$ with exponent~$c$, 
where we use the same $c$ to control the amount of advice
and the length of the computation:
\[
\alpha^c_\phi := 
\forall n {\in} \Log \, \exists e{<}2^{n^c} \,
	\forall x{<}2^n, \ \acc((e,x), 2^{n^c}) \leftrightarrow \phi(x).
\]
What this actually says is that $\phi$ is definable by the universal
deterministic machine, with a suitable time bound, using advice $e$ that depends
on the length of the input. Strictly speaking, being in $\EXP/\poly$  means
being accepted by \emph{some} exponential time machine with advice,
not just the universal machine, but $V^0_2$ is strong enough
to prove that the universal machine 
can simulate any other
machine using an appropriate code and time bound.\footnote{
Precisely, it proves that given a computation of a machine~$M$
we can construct a suitable computation of the universal machine
simulating~$M$.}

A language is in $\NEXP$ if and only if it is definable
by a $\St_1$ formula, and we will formalize $\NEXP$ by treating it as the
class of predicates defined by such formulas.
Again we could be more strict and insist that being in $\NEXP$
means being definable by the $\St_1$ formula 
``there is an accepting computation of~$M$ on this input''
 for some non-deterministic machine~$M$, but this would make no
difference to our argument. In particular, it is easy to construct an $M$
such that the formula above is equivalent  in $V^0_2$ to the formula ``$\exists Y \neg\PHP(x,Y)$'' 
which we use below.

We can now state precisely what we mean for a
theory $T$ to prove that $\NEXP \subseteq \EXP / \poly$:
we mean that for every $\St_1$ formula~$\phi$, there is
an exponent $c \in \NN$ such that $T \vdash \alpha^c_\phi$.
Thus to prove Proposition~\ref{pro:exp_poly_simple},
we need to show that
there is a $\St_1$-formula $\phi$ such that 
$V^0_2 + \EXPax + \{ \neg \alpha^c_\phi : c \in \NN \}$ is consistent.

Our proof uses a similar approach to~\cite{atserias2024consistency}. 
We first prove a lemma that gives us a model of $V^0_2 + \EXPax$
in which the pigeonhole principle fails at some size~$a$. Then we show that, if Proposition~\ref{pro:exp_poly_simple} were false, we would be able
to prove the pigeonhole principle in this model by induction.
Below, $\PHP(x,R)$ is the $\Sigma^b_\infty$ formula
expressing that $R$ is \emph{not} the graph of an injection from $x$ to~$x{-}1$.
We use the notation~$Z^e$ to mean ``the $e$th set coded by $Z$'' in the standard
way of coding many sets into one, that is, we use
 $x \in Z^e$ to mean~$(x,e) \in Z$.

\begin{lem} \label{lem:model}
For any $k \in \NN$ there is a model $M$,
with $n \in \mathrm{Log}$ and relations $R$ and $Z$ on~$M$ such that
\[
M \vDash
T_2(R, Z)
+ \forall e{<}{2^{n^k}} \comp(e, 2^{n^k}, Z^e) 
+ \neg \PHP(2^{n}, R).
\] 
\end{lem}

\begin{proof}
Suppose not. Then there is some $k \in \NN$ such
that
\[
T_2(R,Z) \vdash \forall n{\in}\mathrm{Log}, \,
\left[ \exists e{<}{2^{n^k}} \neg \comp(e, 2^{n^k}, Z^e) \right] 
\vee  \PHP(2^{n}, R).
\] 
Let us write $a$ for $2^n$. 
By the Paris-Wilkie translation of relativized bounded arithmetic into propositional logic
(\cite{paris1985counting} or see~\cite{krajicek1995bounded}),
for each $n$ there is a constant-depth Frege refutation~$\pi_a$, 
of size quasipolynomial in $a$, of the propositional formula
\[
\langle  \forall e{<}{2^{n^k}} \comp(e, 2^{n^k}, Z^e) \rangle
\ \wedge \
\langle \neg \PHP(a,R) \rangle
\]
where expressions in angled brackets represent translations of first-order
into propositional formulas, using $a$ as a size parameter.
The two conjuncts are in disjoint propositional variables,
standing for the bits of $Z$ on the left and the bits of $R$ on the right.

We now observe that there is an assignment $\alpha$ to the bits of $Z$ 
that satisfies the left-hand 
conjunct, which we can construct by actually running the  exponential time computations
on each input~$e$ and recording them as sequences of bits. 
Once we restrict $\pi_a$ by $\alpha$, what is left is a quasipolynomial-sized
constant-depth refutation of $\langle \neg \PHP(a, R) \rangle$, which is 
impossible by~\cite{krajivcek1995exponential, pitassi1993exponential}.
\end{proof}

We  derive Proposition~\ref{pro:exp_poly_simple} from the lemma.

\begin{proof}[Proof of Proposition~\ref{pro:exp_poly_simple}]
Let $\phi(x)$ be the formula $\exists Y \neg \PHP(x,Y)$.
We will show that $V^0_2 + \EXPax + \{ \neg \alpha^c_\phi : c \in \NN \}$ is consistent.
Suppose this is not the case.
Then there is some $c \in \NN$ such that $V^0_2 \vdash \EXPax \rightarrow \alpha^c_\phi$.

Writing $a$ for $2^n$, we have that $\alpha^c_\phi$ is $\forall n {\in} \mathrm{Log} \, \Phi(a)$
where  
\[
\Phi(a) \ := \
\exists e{<}2^{n^c} \,
	\forall x{<}2^n, \ \acc((e,x), 2^{n^c}) \leftrightarrow \phi(x)
\]
expresses that $\phi$ is
in $\EXP / \poly$ at length~$n$. We can move the quantifier $\forall n {\in} \mathrm{Log}$
outside the implication $\EXPax \rightarrow \alpha^c_\phi$
and apply Parikh's theorem~(\cite{parikh1971existence} or see~\cite{buss1985bounded}) 
to obtain that for some~$k$, which we may assume is larger than $c$,
\begin{equation}\label{eq:Phi}
V^0_2 \vdash \forall n {\in}\mathrm{Log}, \,
\left[ \forall e,t{<}2^{n^k} \exists W \comp(e,t,W) \right] \longrightarrow \Phi(a).
\end{equation}

Now let $M, n, R, Z$ be as given by Lemma~\ref{lem:model},
so that $M$ satisfies simultaneously
$T_2(R, Z)$,
$\forall e{<}{2^{n^k}} \comp(e, 2^{n^k}, Z^e)$ 
and $\neg \PHP(2^{n}, R)$. 
Since $\comp$ and $\neg \PHP$ do not contain any universal set-sort
quantifiers, we may assume without loss of generality that  $M$
is a model of $V^0_2$, since we can make it into such a model by adding to it every bounded
set definable in $M$ by a $\Sigma^b_\infty(Z,R)$ formula
with number-sort parameters.
Then in particular $M \vDash \forall e,t{<}2^{n^k} \exists W \comp(e,t,W)$,
since every such~$W$ is encoded inside $Z^e$, so $M \vDash \Phi(a)$ by~(\ref{eq:Phi}).

Thus we have $M \vDash V^0_2 +  \neg \PHP(a, R)$
and simultaneously, expanding $\Phi(a)$, that for some~$d \in M$,
\[
M \vDash \forall x{<}a, \ \acc((d,x), 2^{n^c}) \leftrightarrow \exists Y \neg \PHP(x, Y).
\]
Since $Z$ uniformly contains all computations running in time up to $2^{n^k}$, 
we can replace $\acc((d,x), 2^{n^c})$ with an equivalent 
$\Sigma^b_\infty(Z)$ formula $\theta(Z, d, x)$
where $\theta$  simply looks up in $Z$ the final state of the computation on input $(d,x)$.
Since induction holds for $\Sigma^b_\infty(Z)$ formulas in $M$, 
and $M \vDash \forall Y \, \PHP(x, Y)$
for every standard $x$, we conclude by induction in~$M$ that there
is some $x<a$ in $M$ with $\forall Y \, \PHP(x, Y)$ but $\exists Y' \neg \PHP(x+1, Y')$.
But in $V^0_2$, given a relation $Y'$ failing $\PHP$ at $x+1$ we can, by
changing at most two pigeons, construct
a relation $Y$ failing $\PHP$ at $x$, as in~\cite{atserias2024consistency}. So this is a contradiction.
\end{proof}

We  go on to show the full Theorem~\ref{the:exp_poly},
which replaces $V^0_2 + \EXPax$ with the stronger theory $V^0_2 + \forall \St_1(\NN)$.
Let us observe in passing that  
 $V^0_2 + \EXPax$ already captures a nontrivial amount of the strength of $V^1_2$, 
 namely
it proves all the $\Pi_1$ number-sort consequences of~$V^1_2$~\cite{kol2011so}
and in fact every $\forall\St_1$ consequence of~$V^1_2$~\cite{beckmann2014improved}.

\begin{proof}[Proof of Theorem~\ref{the:exp_poly}]
We  want to show that $V^0_2 + \forall \St_1(\NN)+ \{ \neg \alpha^c_\phi : c \in \NN \}$ is consistent,
where $\phi(a)$ is again the formula $\exists Y \neg \PHP(a,Y)$.
Suppose not. Then by compactness, and using the fact that in $V^0_2$ we can
combine any finite number of $\forall \St_1$ formulas into one formula,
 there is $c \in \NN$ and a single $\Sigma^b_\infty(U)$ formula $\theta(e, U)$ such that
  \[
 V^0_2 \vdash \EXPax \wedge \forall e \exists U \theta(e,U) \rightarrow \alpha^c_\phi
 \]
and $\forall e \exists U \theta(e,U) $ is true in the standard model.
We include $\EXPax$ here to guarantee natural behaviour
of the relation $\acc$ which occurs in $\alpha^c_\phi$.

We now imitate the proof of Proposition~\ref{pro:exp_poly_simple}.
As in that proof, we can use Parikh's theorem to bound, by some term in $a=2^n$,
 the values of $e$ for which we need to witness $\theta$. That is, we have for some $k>c$ that
\begin{equation*}
V^0_2 \vdash \forall n {\in}\mathrm{Log}, \,
\left[
\forall e,t{<}2^{n^k} \exists W \comp(e,t,W) 
\wedge \forall e{<}2^{n^k} \exists U \theta(e,U) \right]
\longrightarrow \Phi(a).
\end{equation*}
Finally we strengthen Lemma~\ref{lem:model} 
to give a model~$M \vDash T_2(R,Z,U)$ which, in addition to the conditions
in Lemma~\ref{lem:model} , also satisfies
$\forall e{<}2^{n^k} \theta(e, U^e)$.
 We can do this using the same argument that we used before for the axiom
 $\EXPax$. Namely, we
can always find an assignment to the variables for $U$ which satisfies the propositional translation
$\langle \forall e{<}2^{n^k} \theta(e, U^e) \rangle$, because the
sentence $\forall e \exists U \theta(e,U)$ is true in the standard model.
\end{proof}

\section{$\NEXP$ and $\coNEXP$}

We prove Theorem~\ref{the:conexp}, that 
$V^0_2 + \forall \St_1(\NN) \not \vdash \NEXP  \subseteq \coNEXP$. 
The argument is slightly different as it does not rely so directly on induction, and
it is not clear if a similar argument can work for $\coNEXP / \poly$.

\begin{proof}[Proof of Theorem~\ref{the:conexp}]
We will again take  the formula $\exists Y \neg \PHP(x, Y)$ as
our relation in $\NEXP$. We will show that the theory does not prove
it is in $\coNEXP$. Suppose for a contradiction that it does, 
by which we mean that there is a  $\Sigma^b_\infty(S)$ formula $\chi(x,S)$,
with no other free variables, such that 
\begin{equation}\label{eq:NEXP_coNEXP}
V^0_2 + \forall \St_1(\NN) \vdash \forall x, \ \exists Y \, \neg \PHP(x,Y) \leftrightarrow \forall S \, \chi(x,S)
\end{equation}
(but see the discussion at the end of this section for a different formalization,
which allows a parameter in~$\chi$ and
for which our argument does not work).

As in the proof of Theorem~\ref{the:exp_poly}, we can use compactness  
to replace $\forall \St_1(\NN)$ with a single true $\forall \St_1$ sentence
$\forall e \exists U \theta(e,U)$,  then move this to the right-hand side, inside
the scope of~$\forall x$, and use
Parikh's theorem to bound~$e$. We also only take the left-to-right direction
of the equivalence in~(\ref{eq:NEXP_coNEXP}). We obtain in this way, for some $k\in\NN$, that
\begin{equation*}
V^0_2 \vdash \forall x, \left[
\forall e {<} 2^{|x|^k} \exists U \theta(e,U) \
\wedge   \ \exists Y  \neg \PHP(x,Y) \right] \rightarrow \forall S \, \chi(x,S).
\end{equation*}
Hence to get a contradiction it is enough
to find a model $M$ with an element $a$ 
and relations $R$, $S$, $U$ on $M$ such that
\[
M \vDash
T_2(R,S,U)
+ \forall e{<}2^{|a|^k} \theta(e, U^e) 
+ \neg \PHP(a,R) + \neg \chi(a,S).
\]

\sloppypar
Suppose there is no such $M$. Then as in Lemma~\ref{lem:model} there are
quasipolynomial-size refutations $\pi_a$ of the propositional translation
\[
\langle \forall e{<}2^{|a|^k} \theta(e, U^e) \rangle
\wedge \langle \neg \PHP(a, R) \rangle
\wedge \langle \neg \chi(a,S) \rangle.
\]
We can construct an assignment to the
$U$ variables that satisfies the first conjunct $\langle \forall e{<}2^{|a|^k} \theta(e, U^e) \rangle$
exactly as in the proof of Theorem~\ref{the:exp_poly}.
For the last conjunct $ \langle \neg \chi(a,S) \rangle$, observe that 
in the standard model $\exists Y \, \neg \PHP(x,Y)$ is false for all $x$. Therefore, 
by~(\ref{eq:NEXP_coNEXP}) and the fact that 
our theory $V^0_2 + \forall \St_1(\NN)$
 is sound,
we have that $\forall x \exists S  \chi(x,S)$ holds in the standard model,
so there is an assignment to the $S$ variables satisfying~$ \langle \neg \chi(a,S) \rangle$.
Applying these two partial assignments to $\pi_a$ we get quasipolynomial-size
refutations of  $\langle \neg \PHP(a, R) \rangle$, which is impossible.
\end{proof}

We  discuss briefly what happens if you try to extend this argument
to show unprovability of
$\NEXP \subseteq \coNEXP / \poly$.
A natural formalization of this inclusion is: for every
$\Sigma^b_\infty(Y)$ formula $\phi(x, Y)$,
there is a $\Sigma^b_\infty(S)$ formula $\chi(e,x,S)$ such that
\begin{equation} \label{eq:coNEXP/poly}
\forall n \in \mathrm{Log} \,
\exists e  \, \forall x {<} 2^n, \,
\exists Y \phi(x,Y) \leftrightarrow \forall S \chi(e,x,S).
\end{equation}
Suppose this is provable, and let us ignore for simplicity the dependence of~$e$ on~$n$.
We can try to imitate the proof of Theorem~\ref{the:conexp}. We put
$\phi(x,Y) := \neg \PHP(x,Y)$, so that $\exists Y \phi(x,Y)$ is always false in the standard
model, and use the right-to-left direction of~(\ref{eq:coNEXP/poly}) to get some $e_0 \in \NN$
for which $\forall x \exists S \neg \chi(e_0, x, S)$ is true in the standard model.
We can then use this to construct a model $M$ with relations $R, S$ satisfying, among
other things, that $\phi(a,R)$ and $\neg  \chi(e_0, a, S)$.
We would like $M$ to in some way falsify the left-to-right direction of~(\ref{eq:coNEXP/poly}),
but it does not, since that asserts that for some $e$ we have 
$\exists Y \phi(a,Y) \rightarrow \forall S \chi(e,a,S)$, and we only have that this fails for~$e_0$,
which may be different from~$e$.

Finally let us mention a more liberal formalization of $\coNEXP$, for which
we do not know how to prove Theorem~\ref{the:conexp}. Our formulation of 
$T \vdash \NEXP \subseteq \coNEXP$ in Theorem~\ref{the:conexp} is: 
for every formula $\phi(x)$ representing a $\NEXP$ predicate, there is a 
$\Sigma^b_\infty(S)$ formula $\chi(x,S)$ such that 
$T \vdash \forall x, \ \phi(x) \leftrightarrow \forall S \chi(x,S)$.
A weaker version of this allows $\chi$ to take an extra parameter~$e$, and has the form:
for every formula $\phi(x)$ representing a $\NEXP$ predicate, there is a 
$\Sigma^b_\infty(S)$ formula $\chi(e, x,S)$ such that 
$T \vdash \exists e \forall x, \ \phi(x) \leftrightarrow \forall S \chi(e, x,S)$.
This is similar to  our formalization of $T \vdash \NEXP \subseteq \coNEXP / \poly$,
except that $e$ no longer varies with the binary length of~$x$.
We do not know how to prove Theorem~\ref{the:conexp} for this version, 
for the reason sketched in the previous paragraph.

It is helpful to think of $\chi(e,x,S)$ here as expressing acceptance by a universal $\coNEXP$ machine
which takes a description of a machine~$e$. In Theorem~\ref{the:conexp} we ask that there is some fixed,
standard~$e \in \NN$ for which $\phi(x)$ is equivalent to $\forall x \chi(e,x,S)$;
in the weaker version the machine~$e$ may be different in different models of~$T$, and may be 
nonstandard. We can call these different formalizations respectively 
$\coNEXP$ \emph{without parameters} and $\coNEXP$ \emph{with parameters}.
For example Kraj\'{i}\v{c}ek~\cite[Section~7.6]{krajicek1995bounded} takes the version of 
a complexity class with parameters as the more basic one;
in the notation of~\cite{krajicek1995bounded} the class without parameters
in Theorem~\ref{the:conexp} would be denoted rather $\coNEXP^{\mathrm{stan}}$.

\section*{Acknowledgement}
I am grateful to Albert Atserias, Sam Buss and Moritz M\"{u}ller for 
helpful discussions on this topic and for comments on earlier versions of this work,
and to an anonymous referee for raising the issue of different formalizations 
of~$\coNEXP$.

\bibliographystyle{alphaurl}
\bibliography{main}

\end{document}